\documentclass[12pt,a4paper]{amsart}

\usepackage{latexsym}
\usepackage{amsmath}
\usepackage{amssymb}
\usepackage{amsthm}
\usepackage{amsxtra}
\usepackage{amsfonts}
\usepackage{mathrsfs}
\usepackage{amsbsy}
\usepackage{graphicx}
\usepackage{hyperref}
\usepackage{mathtools}
\usepackage{stackengine}

\usepackage{color}

\newtheorem{teo}{Theorem}[section]
\newtheorem{lem}[teo]{Lemma}

\newtheorem{prop}[teo]{Proposition}
\newtheorem{remark}[teo]{Remark}

\numberwithin{equation}{section}



%
%


\renewcommand{\Im}{\operatorname{Im}\,}
\renewcommand{\Re}{\operatorname{Re}\,}

\newcommand{\HH}{\mathcal H}

\newcommand{\GG}{\mathscr G}

\newcommand{\B}{\mathcal B}
\newcommand{\F}{\mathcal F}

\newcommand{\n}{\noindent}

\newcommand{\ve}{\varepsilon}

\newcommand{\al}{\alpha}
\newcommand{\ga}{\gamma}

\newcommand{\la}{\lambda}

\newcommand{\ome}{\boldsymbol{\omega}}
\newcommand{\f}{\frac}
\newcommand{\ba}{\begin{eqnarray}} \newcommand{\ea}{\end{eqnarray}}
\newcommand{\be}{\begin{equation}} \newcommand{\ee}{\end{equation}}
\newcommand{\bdm}{\begin{displaymath}} \newcommand{\edm}{\end{displaymath}} 
\newcommand{\brr}{\begin{array}}\newcommand{\err}{\end{array}}

\newcommand{\lf}{\left}
\newcommand{\ri}{\right}

\newcommand{\bml}{\begin{gather}} 
\newcommand{\eml}{\end{gather}}

\DeclareMathOperator{\Ran}{Ran}

\newcommand{\ran}{\rangle}
\newcommand{\lan}{\langle}

\newcommand{\beq}{\begin{equation}}
\newcommand{\eeq}{\end{equation}}
\newcommand{\RE}{\mathbb{R}}

\def\CO{{\mathbb C}}

\renewcommand{\leq}{\leqslant}
\renewcommand{\geq}{\geqslant}
\renewcommand{\epsilon}{\varepsilon}

\renewcommand{\d}{{\bf d}}
\newcommand{\x}{{\bf x}}

\renewcommand{\k}{{\bf k}}
\newcommand{\0}{{\bf 0}}

\newcommand{\Rb}{{\bf R}}

\newcommand{\FF}{{\F^\sharp}}

\newcommand{\V}{\mathcal V}



\setlength{\textwidth}{17cm}
\setlength{\textheight}{24cm}
\setlength{\oddsidemargin}{0cm}
\setlength{\evensidemargin}{0cm}
\setlength{\marginparwidth}{2cm}
\hoffset=-0.5truecm
\voffset=-1.5truecm
\footskip = 30pt
\marginparsep=-0.2cm



\title[Failure of scattering for the NLSE]{Failure of scattering for the NLSE with a point interaction in dimension two and three}

\author[C. Cacciapuoti]{Claudio Cacciapuoti}
\address{Dipartimento di Scienza e Alta Tecnologia, Universit\`a dell'Insubria, Via Valleggio 11, 22100 Como, Italy}
 \email{claudio.cacciapuoti@uninsubria.it}

\author[D. Finco]{Domenico Finco}
\address{Facolt\`a di Ingegneria, Universit\`a Telematica
Internazionale Uninettuno,  Corso Vittorio Emanuele II 39, 00186 Roma, Italy}
\email{d.finco@uninettunouniversity.net}

\author[D. Noja]{Diego Noja}
\address{Dipartimento di Matematica e Applicazioni, Universit\`a
 di Milano Bicocca,  via Roberto Cozzi 55, 20125 Milano, Italy}
\email{diego.noja@unimib.it} 

\begin{document}

\begin{abstract}

In this paper we consider the NLS equation with power nonlinearity and a point interaction (a ``$\delta$-potential'' in the physical literature) in dimension two and three. We will show that for low power nonlinearities there is failure of scattering to the free dynamics or to standing waves. In the recent paper \cite{MN} Murphy and Nakanishi consider the NLS equation with  potentials and measures, singular enough to include the $\delta$-potential in dimension one and they show analogous properties. We extend the result to higher dimensions and this needs a different treatment of the linear part of the interaction, due the qualitatively different and stronger character of the singularity involved.
\end{abstract}

\maketitle

\begin{footnotesize}
 \emph{Keywords:} Non-linear Schr\"odinger equation, Scattering,  Point interactions.
 
 \emph{MSC 2020:}  35J10,  35Q55, 35A21
 
\end{footnotesize}

\section{Introduction}
We consider the nonlinear Schr\"odinger equation
\begin{equation} \label{cauchy}
\begin{aligned}
i \partial_t\psi   &= \HH_\alpha \psi  + F(\psi) \\
 \end{aligned} 
\end{equation}
where $\HH_\alpha$ is the family of self-adjoint operators known as point interactions (see later) and $F(\psi)=\pm|\psi|^{p-1}\psi$ .  This equation has been recently considered, in dimension two and three, by several authors as regards its well posedness in different functional frameworks (\cite{CFN21, FGI22}), existence and stability of standing waves (\cite{ABCT_jmp, ABCT_CV}) and blow-up (\cite{FN_22}).
When $\alpha = +\infty$ the operator $\HH_\alpha$ coincides with the Laplacian and \eqref{cauchy} is the standard NLS equation. In this case it is well known that scattering to the free dynamics is impossible for low power nonlinearity, in the sense that no non trivial initial state behaves asymptotically as a solution of the free equation (``absence of scattering''). The nonlinearities with this property are called long range nonlinearities, because in a sense they mimic the effect of a long range potential in the linear equation. They are given by the condition $1<p\leq 1+\frac{2}{n}$, corresponding to $1<p\leq 2$ for $n=2$ and $1<p\leq \frac{5}{3}$ for $n=3$. Absence of scattering in this framework has been well known since a long time, starting with the original papers \cite{Gla, Stra, Ba}; see also \cite{caz}, Theorem 7.5.2. Quite recently the subject has been reconsidered by several authors. In particular, a wide generalization of the classical quoted results has been obtained by Murphy and Nakanishi in \cite{MN}. In the cited paper the authors generalize the absence of scattering for long range nonlinearities in two directions. They replace the Laplacian with a Schr\"odinger operator with quite general external potentials, including measures; and when the dynamics allows the existence of space localized solutions globally bounded in time (such as, e.g., standing waves), they also exclude the occurrence of  scattering around these solutions. Among the models included, the authors stress the case of the NLS equation with a delta potential in one dimension, a quite well studied model in the last decades, with many results ranging from well posedness, to scattering and orbital and asymptotic stability of standing waves (see the bibliography in \cite{MN}). Further extensions of the result for the NLS equation with a delta potential have also been given in the (still one dimensional) case of the NLS equation on star graphs \cite{aoki1, aoki2}. In particular it is implicit in \cite{aoki1} the case of a line with a general point interaction, not restricted to delta potentials. Concerning the scattering problem for  NLS equation on graphs, we also point out the recent paper \cite{ikeda22} in which the authors analyze the scattering vs. blow-up dichotomy in the supercritical regime ($p>5$) for the NLS equation on a star-graph. In the present paper we consider the higher dimensional versions of a delta potential, the so called point interactions. As it is well known, in dimension higher  than one there is no possibility to properly define a $(-\Delta+\delta)$-like operator, at least if one wishes to give to the delta distribution the usual meaning. The way out is to define a self-adjoint operator corresponding to a perturbation of the Laplacian concentrated at a point. A natural way to do this is to consider the symmetric operator $-\Delta$ with domain $D(-\Delta) := C_0^\infty(\RE^n\backslash\{\0\})$ and to build  its self-adjoint extensions. A non trivial result of this procedure (i.e. not coinciding with the Laplacian) only exists if $n\leq 3$, and for the case of our interest ($n=2,3$), it constitutes a $1$-parameter family of self-adjoint operators $\HH_\al$ where  $\alpha\in \RE^*$ is the parameter that fixes the self-adjoint extension; as already mentioned, for $\alpha=+\infty$ one has $\HH_\al=-\Delta$ (see \cite{Albeverio} and Section 2 for more details).\\ With this premise, Eq. \eqref{cauchy} is meaningful, as an abstract Schr\"odinger equation with a well defined linear part and as  recalled, several well posedness results exist{, see, e.g., \cite{CFN21, FGI22} and, for abstract results, \cite[Ch. 3.3]{caz} and \cite{OSY12}}. Concerning the behavior of its solutions,
our main result is the following.
\begin{teo} \label{t:main} Suppose that a solution $\psi(t)$ of Eq. \eqref{cauchy} exists in $L^2$, conserves the $L^2$-norm, it is global in the future and that it admits the asymptotic decomposition 
\be\label{decomposition}
\psi(t) = e^{i t \Delta} \psi_+ + l(t) + o(1) \qquad \text{in} \;\; L^2(\RE^n) \ \ \text{as}\ \  t\to +\infty 
\ee
with $\psi_+\in L^2(\RE^n)$ and $l\in L^\infty(\RE^+;  L^2 (\RE^n) \cap  L^{q} (\RE^n))$ for some $1<q<2$.\\
Then, if $n=2$ and $1<p{<} 2$, or $n=3$ and   $1<p<4/3$, one has $\psi_+ \equiv 0$.
\end{teo}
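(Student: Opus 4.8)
The plan is to adapt the classical Glassey--Strauss virial-type argument to this
setting. The key idea is to test the asymptotic decomposition \eqref{decomposition}
against a slowly decaying weight and exploit the conservation of the $L^2$-norm together
with the gain in integrability provided by the nonlinearity. First I would consider the
``interaction Morawetz'' or weighted-mass functional
\be\label{eq:functional}
M(t) = \Im \int_{\RE^n} \ove{\psi(t,\x)}\, \x\cdot\nabla\psi(t,\x)\,d\x,
\ee
or a suitable regularized variant thereof, and compute its time derivative along the flow
\eqref{cauchy}. The free part $e^{it\Delta}\psi_+$ and the bounded part $l(t)$ enter the
computation in controlled ways: the assumption $l\in L^\infty(\RE^+;L^2\cap L^q)$ for some
$1<q<2$ is precisely what is needed to keep the contribution of the localized piece
integrable against the nonlinearity, while the free evolution disperses.

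Concretely, I would argue by contradiction, assuming $\psi_+\not\equiv 0$. The heart of the
matter is the lower bound on the nonlinear contribution: for a long-range power
$1<p\leq 1+\tfrac{2}{n}$ (which is exactly the range $1<p<2$ for $n=2$ and contains
$1<p<4/3$ for $n=3$), the dispersive decay of $e^{it\Delta}\psi_+$ in $L^{p+1}$ is too
slow to be integrable in time, so that
\be\label{eq:lower}
\int_1^\infty \norm{e^{it\Delta}\psi_+}{L^{p+1}}^{p+1}\,dt = +\infty
\ee
whenever $\psi_+\neq 0$. By substituting the decomposition \eqref{decomposition} into the
equation of motion for $M(t)$, the dominant term as $t\to+\infty$ should be governed by
this divergent nonlinear integral, producing a growth of $M(t)$ that is incompatible with
the a priori bound $|M(t)|\lesssim \norm{\psi(t)}{L^2}\,\norm{\x\psi(t)}{L^2}$ coming from
$L^2$-conservation and the boundedness of the weighted mass. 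This contradiction forces
$\psi_+\equiv 0$.

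The main obstacle I expect is the treatment of the linear part of the interaction, namely
the presence of $\HH_\al$ rather than $-\Delta$ in \eqref{cauchy}. This is exactly the
difficulty the authors flag in the abstract: because the point interaction is a strong,
point-supported singular perturbation, functions in the domain of $\HH_\al$ have a
logarithmic (for $n=2$) or Coulomb-type $|\x|^{-1}$ (for $n=3$) singularity at the origin,
and the naive weighted functional $M(t)$ in \eqref{eq:functional} need not even be
well-defined, let alone differentiable, on such states. I would therefore need to replace
the bare multiplier $\x\cdot\nabla$ by one adapted to the domain of $\HH_\al$ — for
instance, regularizing near the origin and carefully tracking the boundary terms generated
at $\x=\0$ by the singular behavior — and show that these extra terms are controlled or
vanish in the limit. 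A second, more technical point is justifying that the asymptotic
decomposition can be differentiated and inserted into $M'(t)$, for which I would rely on a
density argument together with the well-posedness results quoted in the introduction
(\cite{CFN21, FGI22}) and a priori energy bounds, approximating $\psi_+$ by smooth compactly
supported data and passing to the limit.
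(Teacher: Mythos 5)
Your proposal rests on the virial functional $M(t)=\Im\int_{\RE^n}\ove{\psi}\,\x\cdot\nabla\psi\,\d\x$, and this is where it breaks down: the theorem assumes only that $\psi(t)$ is an $L^2$ solution with conserved mass and the decomposition \eqref{decomposition}. No finite energy and no finite variance are assumed, so neither $\|\nabla\psi(t)\|$ nor $\|\x\,\psi(t)\|$ is under control (nor even finite, a priori), and the ``a priori bound'' $|M(t)|\lesssim\|\psi(t)\|\,\|\x\psi(t)\|$ that your contradiction hinges on simply does not exist under the hypotheses. Moreover, even granting enough regularity, the virial identity produces $\int|\nabla\psi|^2$ plus a term $\pm c\int|\psi|^{p+1}$ whose sign depends on the focusing/defocusing character (the theorem covers both signs of $F$), together with genuinely singular boundary contributions at $\x=\0$ coming from the $\ln|\x|$ (resp.\ $|\x|^{-1}$) behavior of functions in $D(\HH_\alpha)$; you flag these but do not control them, and for point interactions they are not lower-order. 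Finally, your mechanism never connects the divergence \eqref{eq:lower} (which concerns $e^{it\Delta}\psi_+$) to the quantity that actually appears in $M'(t)$ (which involves $\psi$ itself and the kinetic term); that link is precisely what a proof must supply. A telling symptom is that your scheme could not produce the restriction $p<4/3$ for $n=3$: if it worked at all it would work up to $p\leq 5/3$, which the paper explicitly states requires a different analysis.

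The paper's proof is Glassey's argument in the form of Murphy--Nakanishi, which is designed exactly to avoid weighted or gradient norms: one tests against a solution of the \emph{linear} equation, $B(t)=\Im\lan\psi(t),w(t)\ran$ with $w(t)=e^{-it\HH}\varphi$, so that $|B(t)|\leq\|\psi_0\|\,\|\varphi\|$ by Cauchy--Schwarz and mass conservation alone, while $B'(t)=\Re\lan F(\psi(t)),w(t)\ran$. Passing to pseudo-conformal variables $\tilde\psi(s)=D_s^\ast M_s^\ast\psi(s)$, $\tilde w(s)=D_s^\ast M_s^\ast w(s)$, the integrand equals $s^{-\frac{n}{2}(p-1)}\lan F(\tilde\psi(s)),\tilde w(s)\ran$, and the whole work consists in showing $\lan F(\tilde\psi(s)),\tilde w(s)\ran\to\lan F(v_+^\sharp),\varphi^\sharp\ran$, where $\sharp$ is the \emph{generalized} Fourier transform associated with $\HH_\alpha$; choosing $\varphi$ so that this limit has positive real part contradicts the boundedness of $B$, since $s^{-\frac{n}{2}(p-1)}$ is not integrable in the long-range regime. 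The nontrivial inputs are Lemma \ref{l:stima-lineare} and the Hausdorff--Young-type inequality \eqref{genhy} for $\FF$, both of which rely on the $L^r$-boundedness of the wave operators for $\HH_\alpha$ (valid for all $r\in(1,\infty)$ when $n=2$ but only for $r\in(1,3)$ when $n=3$, by the results of Yajima and collaborators); this is the precise origin of the restriction $p<4/3$ in dimension three, and it is the ingredient your approach has no counterpart for.
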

\noindent We add some remarks on the statement and on the proof.\\
The meaning of \eqref{decomposition} is that $\|\psi(t) - e^{i t \Delta} \psi_+ - l(t) \|_2\longrightarrow 0$ as $t\to +\infty$. As already mentioned, well posedness of \eqref{cauchy} was treated in \cite{CFN21, FGI22}, but for the validity of the main result, general existence theorems are not needed once the hypotheses are satisfied.\\ 
We also remark that the main result could be stated with $e^{-i\HH_\alpha t}$ in the place of $e^{i\Delta t}$ thanks to scattering theory for the couple $(\HH_\alpha, -\Delta)$ (see the proof of Theorem \ref{t:main} in Section 3).\\
Notice also that the value or sign of $\alpha$ plays no role in the statement of the result.\\
Put in an informal way, Theorem \ref{t:main} states that  for the NLS equation with a point interaction and a long range power nonlinearity, asymptotically free states ($l\equiv 0$) do not exist, nor do exist states that are asymptotic to some kind of localized structure ($l\neq0$) up to a free evolution. As recalled at the beginning, for this model standing waves exist (at least in the focusing case $F(\psi)=-|\psi|^{p-1}\psi$), so that the presence of the $l(t)$ component in the decomposition is essential. Other kind of localized solutions could also exist (for example in the one dimensional case breathing solutions do  exist) but the analysis of models with point interactions in higher dimension is at its beginnings and nothing else is known.\\
Concerning the proof, its skeleton is the same as in the original Glassey's paper \cite{Gla}, after taking into account the improvement of \cite{MN}. {However, differently from \cite{MN}, point interactions in dimensions two and three cannot be treated perturbatively.  As a main consequence, one has  the limitation to $1<p<4/3$ if $n=3$ (in contrast to $1<p\leq5/3$ as in \cite{MN}).   At a technical level the issue arises from the boundedness of the wave operators in $L^p(\RE^3)$ only for $1<p<3$ for point interactions in dimension three (see Remark \ref{remark5} for details). In the light of these considerations, a possible extension to higher powers of the nonlinearity requires an entirely  different analysis.}
%
%
%
\\
Notice finally that the analogous result holds true when $t\to-\infty$, with the same proof replacing the wave operator $W^+$ with $W^-$. However, to avoid irrelevant complications in the notation, we will denote the wave operator with $W$ without any superscript, we only give the proof of the theorem as it is stated, for $t\to+\infty$, and in the analysis it is understood $W=W^+$.\\The structure of the paper is as follows. In Section 2 firstly definition and properties of point interactions are very briefly recalled; the rest of the section is devoted to discuss some scattering properties of the couple $(\HH_\alpha, -\Delta)$. While the material is simple, to the knowledge of the authors it is not stated elsewhere, so it is here included with some detail.
Section 3 is devoted to the proof of Theorem \ref{t:main}. 
\section{Preliminaries \label{s:Prel}}
\noindent In this section we fix the notation, give the main definitions and prove several technical results that will be used in what follows.
\subsection{Notations}
We denote by $\x$, $\k$ and so on, points in $\RE^n$, $n=2,3$.\\ 
We denote by $\hat f$ or $\F f$ the Fourier transform of $f$, defined to be unitary in $L^2(\RE^n)$: 
\[
\hat f(\k) := \frac1{(2\pi)^{n/2}} \int_{\RE^n} \d\x\, e^{-i\k\cdot\x} f(\x) \qquad \k \in \RE^n. 
\]
We denote by $\|\cdot\|$ the $L^2(\RE^n)$-norm associated with the inner product $\langle\cdot, \cdot\rangle$ and with $\|\cdot\|_p$ the $L^p(\RE^n)$-norm while we use $\| \cdot \|_{H^s}$ for the norm in the Sobolev spaces $H^s( \RE^n)$, $s\in\RE$. The set of bounded operators between two Banach space $X$ and $Y$ is denoted by $\B(X,Y)$.\\
For all $\lambda>0$ we denote by $G^\lambda$ the $L^2$ solution of the distributional equation $(-\Delta +\lambda^2 )G^\lambda = \delta_{\0}$, where $\delta_{\0}$ is the Dirac-delta distribution centered in $\x=\0$.  Explicitly we have: 
\[G^\lambda ( \x ) =\left\{ \begin{aligned}
& \frac{1}{2\pi} K_0 ({\la}\,|\x|) \qquad & n=2; \\ 
& \frac{ e^{-{\la} \, |\x|}}{4\pi |\x| } & n=3,
\end{aligned}\right.
\]
where  $K_0$ is the Macdonald function of order zero. We recall the relation $\frac{1}{2\pi} K_0(z) = \frac{i}{4} H_0^{(1)}(iz )$, where $H_0^{(1)}(z)$ is the Hankel function of first kind and order zero (also known as zero-th Bessel function of the third kind), see, e.g., \cite{was} Eq. (8) p. 78). For real arguments, we will also need the relation $H_0^{(1)} (-|\k||\x| )=- \overline{H}_0^{(1)} (|\k||\x| )$ (\cite {was}).\\
Given $q\in[1,+\infty]$, $q'$ denotes the conjugate exponent, i.e., $q^{-1} + {q'}^{-1} = 1$.\\
For a given real number $q$, we write $q\pm$ to denote $q\pm \ve $  for some sufficiently small $\ve>0$.\\
We denote by $o(L^q)$ any term whose norm in $L^q(\RE^n)$ {converges to $0$} as $t\to \infty$.\\
We use $c$ and $C$ to denote generic positive constants whose  dependence on the parameters of the problem is irrelevant and it is understood that their value may change from line to line.

\subsection{Point interactions\label{ss:PI}} We denote by $\HH_\alpha$  the  self-adjoint operator in $L^2 (\RE^n)$, $n=2,3$, given by the Laplacian with a delta interaction of ``strength'' $\alpha$ placed in $\x =\0$.

We recall that, see \cite{Albeverio}, both for $n=2$ and $n=3$ the structure of the domain of $\HH_\alpha$ is the same: 
\begin{equation}\label{DD}
D (\HH_\alpha) = \left\{ \psi \in L^2 (\RE^n)|\;\psi  = \phi^\la   +q \,  G^\lambda ,\, \phi^\la \in H^2(\RE^n){, \, q\in\CO}, \;\;  q= {\Gamma_\alpha(\la)}  \phi^\la(\0)\ri\}
\end{equation}
with 
\[
\Gamma_\alpha(\la) = \left\{\begin{aligned}
& \f{2\pi}{2\pi \al +\ga +\ln({\la}/2) } \qquad  & n=2, \\
& \f{1}{ \al +\frac{{\la}}{4\pi}  } & n=3 ;
\end{aligned} \right. \qquad \alpha \in\RE. 
\]
Here  $ \la $ can be taken in $\RE^+$, possibly excluded one point which we denote by $|E_\alpha|^{1/2}$, where $E_\alpha$ is the negative eigenvalue of $\HH_\alpha$, see below for the details. {Even though $\lambda$ enters the right hand side in the definition of the operator domain $D (\HH_\alpha)$, since $G^\lambda - G^{\tilde \lambda} \in H^2(\RE^n)$ for any $\lambda,\tilde\lambda>0$, the domain does not depend on $\lambda$, see, e.g., \cite[Pg. 261]{FGI22}. Moreover, $q$ is determined only by the behavior of $\psi$ as $|\x|\to0$ and does not depend on $\lambda$. Explicitly, 
\[
q = - \lim_{|\x|\to0}\frac{2\pi}{\ln(|\x|)} \psi(\x) \qquad \text{for $n=2$}\] and
\[
q = \lim_{|\x|\to0}4\pi |\x| \psi(\x) \qquad \text{for $n=3$}.\]
}
For $n=2$,  $\gamma$ is the Euler-Mascheroni constant (for a comparison with \cite{Albeverio}, we recall that $\gamma$ is related to the digamma function $\Psi$ by $\gamma = - \Psi(1)$). The constant $\al$ is real and it parametrizes the family of operators through the relation  $ q= \Gamma_\alpha(\la)  \phi^\la(\0)$, which plays the role of a boundary condition at the singularity. For both $n=2$ and $n = 3$ the free dynamics is recovered in the limit $\alpha\to+\infty$. {Taking into account the decomposition $\psi  = \phi^\la   +q \,  G^\lambda $ in the definition of the  operator domain, t}he action of the operator  is given by 
\begin{equation}\label{action0}
\HH_\alpha \psi = -\Delta \phi^\la - \lambda^2 q G^\lambda \qquad \forall\psi\in D(\HH_\alpha).
\end{equation}

\n
The spectrum of the Hamiltonian $\sigma_c(\HH_\alpha)=\sigma_{ac}(\HH_\al)=[0,\infty)$ and $\sigma_{sc}(\HH_\al)=\emptyset$; for $n=2$, $\HH_\alpha$ has  a simple negative eigenvalue $\{ E_\al \}$ for any $ \al \in \RE$; for $n=3$,  if $\al\geq 0$ there is no point spectrum, while for $\al<0$ there is a simple negative eigenvalue $\{ E_\al \}$. Whenever the eigenvalue $E_\alpha$ exists, we denote by $\Phi_\alpha$ the corresponding normalized eigenvector. We will not need the explicit form of eigenvalues (see \cite{Albeverio} for details), while eigenvectors are given by
\[
\Phi_\al (\x) =  \left\{ \begin{aligned}
&     N_\alpha  K_0 (  2\, e^{-(2\pi \al +\ga)}|\x|) \qquad && n=2,\; \alpha\in\RE; \\ 
&  \sqrt{2|\alpha|}\frac{ e^{4 \pi \al \, |\x|}}{|\x|}  &&   n=3,\;\alpha <0,
\end{aligned}\right.
\]
where $N_\alpha$ is a normalization constant whose explicit value is irrelevant for our analysis. 

To simplify the notation, and since $\alpha$ is regarded as a fixed parameter, we omit the suffix  	$\alpha$ from objects that may depend on it  and from now on we simply write, for example,  $\HH\equiv \HH_\alpha$. 

\subsection{Generalized Fourier Transform\label{ss:2.2}}
In this section we prove several results regarding the generalized Fourier Transform associated with the continuous spectrum of $\HH$.

We start with with the well known Dollard decomposition of the free Schr\"odinger dynamics
\be \label{decomp}
e^{i t\Delta } = M_tD_t \F M_t,
\ee
where the unitary operators $M_t$ and $D_t$ are given by
\be\label{MD}
M_t f(\x) = e^{i \f{{|\x|}^2}{4t} }f(\x), \qquad D_tf (\x) = \frac{1}{(i 2t)^{n/2}  } f \left( \frac{\x}{2t} \right).
\ee
Formula \eqref{decomp} in particular implies that
\beq \label{larget}
{\lim_{t\to +\infty} \|(e^{i t\Delta }  - M_t D_t\F) f\| =0 \qquad \forall f\in L^2(\RE^n) },
\eeq
by the unitarity of $M_tD_t \F$ and dominated convergence.
We extend formula \eqref{larget} to more general situations. We present the following result which extends formula \eqref{larget} to an abstract setting and may have an interest on its own. {We recall that for an Hilbert space $\mathscr{H}$,  $V\in \B(\mathscr H)$ is a partial isometry if it is an isometry on the orthogonal complement of its kernel.}

\begin{prop} \label{fintagen}
Let $H,\, H_0$ be self-adjoint operators in an Hilbert space ${\mathscr H}$. Assume that there exist a unitary operator $ \F$ and a
partial isometry $ \F^\sharp$ such that
\be \label{wavefinti}
\lim_{t\to +\infty} \| e^{i {t} H } e^{-i t H_0} f - {\FF}^\ast \F f \| =0 \qquad \forall f \in \mathscr H.
\ee
If there exist a unitary operator $\V_t$, such that
\be \label{bah}
\lim_{t\to +\infty} \|  e^{-i t H_0} f - \V_t \F f \| =0 \qquad \forall f \in \mathscr H, 
\ee
then
\be \label{fintagenerale}
\lim_{t\to +\infty} \|  e^{-i t H} f - \V_t \FF f \| =0 \qquad \forall f \in \Ran ( \FF^\ast). 
\ee
\end{prop}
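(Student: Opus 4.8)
The plan is to combine the two hypotheses through a single triangle inequality, after a convenient substitution. First I would rewrite \eqref{wavefinti} so that $e^{-itH}$ acts last. Since $e^{-itH}$ is unitary, hence norm preserving, the limit \eqref{wavefinti} is equivalent to
\[
\lim_{t\to +\infty}\bigl\| e^{-itH_0}f - e^{-itH}\FF^\ast\F f \bigr\| =0 \qquad \forall f\in\mathscr H.
\]
Comparing this with \eqref{bah} and using the triangle inequality, the common term $e^{-itH_0}f$ cancels and I obtain
\[
\lim_{t\to +\infty}\bigl\| e^{-itH}\FF^\ast\F f - \V_t\F f \bigr\| =0 \qquad \forall f\in\mathscr H.
\]
This intermediate relation already carries all the analytic content of the proposition; what remains is purely algebraic.

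The second step is to choose, for a prescribed $g\in\Ran(\FF^\ast)$, a vector $f$ that turns the left slot $\FF^\ast\F f$ into $g$ while simultaneously turning $\F f$ into $\FF g$. Here I would exploit the partial isometry structure of $\FF=\F^\sharp$: the operator $\FF^\ast\FF$ is the orthogonal projection onto the initial space $(\ker\FF)^\perp=\Ran(\FF^\ast)$, so that $\FF^\ast\FF\,g=g$ for every $g\in\Ran(\FF^\ast)$. Given such a $g$, set $f:=\F^\ast\FF g$. Using the unitarity of $\F$ (so that $\F\F^\ast=I$) one computes $\FF^\ast\F f=\FF^\ast\F\F^\ast\FF g=\FF^\ast\FF g=g$ and $\F f=\F\F^\ast\FF g=\FF g$. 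Substituting into the intermediate relation gives precisely
\[
\lim_{t\to +\infty}\bigl\| e^{-itH}g - \V_t\FF g \bigr\| =0 \qquad \forall g\in\Ran(\FF^\ast),
\]
which is \eqref{fintagenerale}.

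I expect no genuine obstacle here: the argument reduces to an algebraic identity resting on two elementary facts, the projection identity $\FF^\ast\FF g=g$ on $\Ran(\FF^\ast)$ and the unitarity of $\F$. The only point demanding care is the bookkeeping that ensures the single substitution $f=\F^\ast\FF g$ reproduces $g$ in one slot and $\FF g$ in the other, and recognizing that restricting to $g\in\Ran(\FF^\ast)$ is exactly the hypothesis that makes $\FF^\ast\FF g=g$ hold; this is why the conclusion is stated on $\Ran(\FF^\ast)$ rather than on all of $\mathscr H$. Equivalently, the first step may be read through the intertwining relation $e^{-itH}\FF^\ast\F=\FF^\ast\F\,e^{-itH_0}$ for the wave operator $W=\FF^\ast\F$, but the direct triangle-inequality route above avoids any discussion of operator domains.
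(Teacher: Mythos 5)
Your proof is correct and follows essentially the same route as the paper's: rewrite \eqref{wavefinti} using the unitarity of $e^{-itH}$, combine with \eqref{bah} by the triangle inequality, and finish with an algebraic substitution exploiting the partial isometry structure of $\FF$. If anything, your choice $f=\F^\ast\FF g$ is slightly more explicit than the paper's final step (``setting $f=\FF^\ast g$''), since it invokes the projection identity $\FF^\ast\FF g=g$ on $\Ran(\FF^\ast)$ directly, whereas the paper tacitly needs $g$ to lie in $\Ran\FF$ so that $\FF\FF^\ast g=g$.
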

\begin{proof}
Let $g \in \mathscr H$, then Eq. \eqref{wavefinti} is equivalent to 
\be
\lim_{t\to +\infty} \| e^{i t H } e^{-i t H_0} \F^\ast g - {\FF}^\ast g \| =0  ,
\ee
and 
\be
\lim_{t\to +\infty} \|  e^{-i t H_0} \F^\ast g - e^{-i tH }{\FF}^\ast g \| =0 .
\ee
By \eqref{bah} we have
\be
\lim_{t\to +\infty} \| \V_t g - e^{-i t H }{\FF}^\ast g \| =0 .
\ee
Setting $f = {\FF}^\ast g$ we arrive at \eqref{fintagenerale}.
\end{proof}

In concrete cases one has ${\mathscr H}=L^2(\RE^n)$, $H_0 = -\Delta$, $H=-\Delta + V$, and $\F$ is the Fourier transform.
Taking into account \eqref{larget} we can set $\V_t =M_t D_t$.
The role of $\FF$ is played by the generalized Fourier transform. Let $\Phi (\k, \x)$ be the generalized eigenfunctions of $H$ (see e.g. \cite {RSIII} Section XI.6), and let
\be \label{genfou}
\FF f (\k) = \int_{\RE^n} \overline{ \Phi (\k, \x) } f(\x) \d\x \qquad \k\in\RE^{n}.
\ee
Assuming enough regularity for $V$, see \cite{Ag, Ike, RSIII} for different sets  of hypothesis,
it is known that $\Phi (\k, \x)$ are well defined, Eq.\eqref{genfou} defines a partial isometry on $L^2(\RE^n)$,  and that 
$\text{Ran} (\FF^\ast) = P_{ac}(H) L^2(\RE^n)$. Moreover the wave operators $W$ exist, they are complete and
\[
W\, f  = \lim_{t\to +\infty} e^{i {t} H } e^{-i t H_0} f = {\FF}^\ast \F f .
\]
In such cases all the hypothesis of Proposition \ref{fintagen} are satisfied and  for any $f\in  P_{ac}(H) L^2(\RE^n)$
one has 
 \[
\lim_{t\to +\infty} \|  e^{-it H} f - M_t D_t \FF f \| =0.
\]
In what follows we will often use the notation $\FF f = f^\sharp$.

Proposition \ref{fintagen} applies  also when  $H= \HH $, $n=2,3$, a situation not immediately included in \cite{Ag, Ike, RSIII}, thanks to the fact that
in these cases the generalized eigenfunctions  and the wave operators have an explicit expression. The generalized eigenfunctions are given by 
\begin{equation}\label{geneigen}
\Phi (\k, \x) =
\frac{e^{i \k\cdot \x}}{(2\pi)^{n/2}}  + \Rb(\k,\x)  
\end{equation}
with  
\begin{equation}\label{geneigenR}
\Rb(\k,\x)=\left\{	\begin{aligned}	
		&\frac{i}4 \frac{1}{2\pi \alpha +\gamma +\ln ({-}|\k|/2i)} H_0^{(1)} (-|\k||\x| ) \qquad &&  \text{for } n=2\\ 
		&\frac{1}{(2\pi)^{3/2}} \frac{1}{4\pi\alpha +i|\k|} \frac{e^{-i|\k||\x|}}{|\x|} && \text{for } n=3,
		\end{aligned} \right.	
\end{equation}
where the logarithm has to be understood as its principal value: $\ln (|\k|/2i)=\ln (|\k|/2)-i{\pi}/{2}${, see \cite[Ch. I.1.4 - Eq. (1.4.11) and Ch. I.5 - Eq. (5.37)]{Albeverio}}. The generalized Fourier transform defined via \eqref{genfou} is a partial isometry on $L^2(\RE^n)$ such that $\text{Ran} (\FF^\ast) = P_{ac}(\HH) L^2(\RE^n)$. {From \cite[App. E]{Albeverio} there follows that} the wave operators $W$ exist, they are complete and can be written as:
\[
W = I + \Omega,
\]
where $\Omega$ is the operator (mapping to radial functions)
\[
\Omega f (\x) =\Omega f (|\x|)=\begin{dcases}
 \dfrac{1}{i\pi} \int_0^\infty \GG_{-\la}(\x) \dfrac{2\pi \la}{2 \pi\al + \ln\lf( \frac{\la}{2} \ri) {+}\frac{\pi i}{2} + \ga } \lf( \int_{\mathbb S^1} \F f(\la \ome ) d\ome \ri) d\la & n=2 \\
\frac{1}{i(2\pi)^{2} } \int_0^\infty \GG_{-\la}(\x) \dfrac{ \la}{\al +i\frac{\la}{4\pi} } \int_\RE e^{i\la r } r \lf( \int_{\mathbb S^2} f(r \ome ) d\ome \ri)dr\, d\la & n=3,
\end{dcases}
\]
and we denoted
\be \label{zoccola}
\GG_{\la}(\x) = 
\begin{cases} 
H_0^{(1)} (\la |\x|) & n=2 \\
\dfrac{e^{i\la |\x|}}{|\x|} & n=3.
\end{cases}
\ee
Finally, we have $W=\FF^\ast \F$. {See also \cite[Eq. (98)]{Y2} and \cite[Eq. (3.6)]{Y1}.} Therefore, thanks to Proposition \ref{fintagen} for any $f\in  P_{ac}(\HH) L^2(\RE^n)$ there holds
\be \label{utile}
\lim_{t\to +\infty} \|  e^{-i t \HH} f - M_t D_t \FF f \| =0  ,
\ee

\noindent
where $\FF$ denotes the generalized Fourier transform defined by Eqs. \eqref{genfou} and \eqref{geneigen}.\\ We conclude this section of preliminaries with the following proposition, showing that $\FF$ satisfies a Hausdorff-Young type inequality.
\begin{prop}Let $\FF$ be defined by Eqs. \eqref{genfou} and \eqref{geneigen}. Then, for any $q \in[2,\infty)$ if $n=2$ and for any $q\in[2,3)$ if $n=3$ there holds true
\be \label{genhy}
\| \FF f \|_q \leq c \| f\|_{q'}. 
\ee
\end{prop}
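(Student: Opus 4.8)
The plan is to factor $\FF$ through the genuine Fourier transform and the wave operator, and then to combine the classical Hausdorff--Young inequality with the $L^p$-boundedness of the wave operators; the whole difficulty will be concentrated in the latter, dimension-dependent input.

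First I would record the algebraic identity $\FF=\F W^\ast$. Since $\F$ is unitary, $\FF$ is a partial isometry with initial space $P_{ac}(\HH)L^2$ (so that $\FF\FF^\ast=I$ and $\FF^\ast\FF=P_{ac}(\HH)$), and $W$ is isometric onto $\Ran(\FF^\ast)=P_{ac}(\HH)L^2$ (so that $WW^\ast=P_{ac}(\HH)$), the relation $W=\FF^\ast\F$ stated above gives $\FF W=\FF\FF^\ast\F=\F$ and hence $\FF=\FF P_{ac}(\HH)=\FF WW^\ast=(\FF W)W^\ast=\F W^\ast$. Writing $W^\ast=I+\Omega^\ast$ with $\Omega$ the operator displayed above, this reads $\FF=\F+\F\Omega^\ast$.

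Second, for $q\geq 2$ the classical Hausdorff--Young inequality yields $\|\F g\|_q\leq c\,\|g\|_{q'}$ for all $g\in L^2(\RE^n)\cap L^{q'}(\RE^n)$. Applying it to $g=W^\ast f$ gives $\|\FF f\|_q=\|\F(W^\ast f)\|_q\leq c\,\|W^\ast f\|_{q'}$, so everything reduces to the $L^{q'}$-boundedness of $W^\ast$; since the identity is bounded on every $L^{q'}$, this is the same as the $L^{q'}$-boundedness of $\Omega^\ast$. The endpoint $q=2$ is immediate because $\FF$ is a partial isometry, and the assertion for $q'<2$ then follows by density of $L^2\cap L^{q'}$ in $L^{q'}$, $\FF$ extending uniquely to a bounded map $L^{q'}(\RE^n)\to L^q(\RE^n)$.

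Third --- and this is the crux --- I would invoke the $L^p$-boundedness of the wave operators for point interactions. By duality, $W^\ast$ is bounded on $L^{q'}$ exactly when $W=I+\Omega$ is bounded on $L^q$. For $n=2$ the wave operator is bounded on $L^p(\RE^2)$ for every $1<p<\infty$, which covers all $q\in[2,\infty)$; for $n=3$ it is bounded on $L^p(\RE^3)$ only for $1<p<3$, which covers exactly $q\in[2,3)$ and is what forces the restriction $q<3$. This boundedness is the genuinely hard ingredient: it can be traced to the explicit kernel of $\Omega$ above, whose three-dimensional singular factor $e^{\pm i\la|\x|}/|\x|$ sits in weak-$L^3(\RE^3)$ and thereby degenerates precisely at $p=3$, matching the threshold discussed in Remark~\ref{remark5}. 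I therefore expect the main obstacle to be establishing the sharp $L^p$-mapping range of $\Omega$ (equivalently of $W^\ast$), including the correct critical exponent $p=3$ in dimension three; once this is available, the Hausdorff--Young estimate for $\FF$ follows from the two short lines of the first and second steps.
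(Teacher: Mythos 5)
Your proof is correct and takes essentially the same route as the paper's: factor $\FF = \F W^\ast$ from the identity $W = \FF^\ast\F$, apply the classical Hausdorff--Young inequality to $\F$, and conclude by invoking the $L^p$-boundedness of the wave operators proved in \cite{Y1,Y2} ($1<p<\infty$ for $n=2$, $1<p<3$ for $n=3$), which is precisely where the restriction $q<3$ in dimension three enters. The only cosmetic difference is that you derive $\FF=\F W^\ast$ through the partial-isometry relations $\FF\FF^\ast=I$, $WW^\ast=P_{ac}(\HH)$, whereas the paper gets it immediately by multiplying $W=\FF^\ast\F$ on the right by $\F^\ast$ and taking adjoints.
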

\begin{proof}
We start from $W= \FF^\ast \F$ where $W$ is the wave operator. Hence,  $\FF^\ast = W \F^\ast$ and $\FF  = \F W^\ast$.
It is known, see \cite{Y1,Y2}, that $W: L^2 (\RE^n) \to  L^2 (\RE^n)$ extends to a bounded operator $W: L^r (\RE^n) \to  L^r (\RE^n)$ 
for $r \in (1,\infty)$ if $n=2$ and for $r \in (1,3)$ if $n=3$. Hence ,  $W^*: L^{r'} (\RE^n) \to  L^{r'} (\RE^n)$  for $r' \in (1,\infty)$ if $n=2$ and for $r' \in (3/2,\infty)$ if $n=3$. Then \eqref{genhy} follows from the Hausdorff-Young inequality
\[
\| \FF f \|_q = \|  \F W^\ast  f \|_q   \leq  \| W^\ast  f \|_{q'} \qquad q\geq 2.  
\]
\end{proof}

 \section{Proof of Theorem \ref{t:main}}

First we prove  a technical lemma.
 \begin{lem}\label{l:stima-lineare}  
 Let  $\varphi_0 \in C^\infty_0 (\RE^n \setminus \{ \0 \} )$, set $\varphi:= \varphi_0 - \Phi_\alpha \langle \Phi_\alpha,\varphi_0\rangle$ ($\varphi:= \varphi_0$ if $\HH$ has no point spectrum) and define $\tilde w(t ) := D_t^\ast M_t^\ast e^{-i t \HH } \varphi$.  Then: \\
 $i)$ for  $q\geq 2$ if $n=2$;\\
 $ii)$ for $2\leq q<3$ if $n=3$;\\
 there holds 
 \begin{equation}\label{stima-lineare}
 \tilde w(t)=  \varphi^\sharp + o(L^q ) \qquad \text{as }t\to+\infty , 
\end{equation}
{where we recall that $\varphi^\sharp = \FF\varphi$}. 
 \end{lem}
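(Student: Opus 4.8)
The plan is to reduce to the $L^2$ statement already contained in \eqref{utile} and then upgrade the convergence to $L^q$ by means of the explicit generalized eigenfunction expansion, together with the Hausdorff--Young inequality \eqref{genhy}. First I would record two preliminary facts. By construction $\varphi = \varphi_0 - \Phi_\alpha\langle\Phi_\alpha,\varphi_0\rangle \in P_{ac}(\HH)L^2(\RE^n)$, and since $\varphi_0$ is smooth and compactly supported while $\Phi_\alpha$ is smooth away from the origin, decays exponentially, and has only a mild singularity at $\0$ ($\log|\x|$ if $n=2$, $|\x|^{-1}$ if $n=3$), one has $\varphi \in L^{q'}(\RE^n)$ and hence, by \eqref{genhy}, $\varphi^\sharp = \FF\varphi \in L^q(\RE^n)$ for every $q$ in the stated ranges. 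The case $q=2$ is then immediate: applying the unitary $D_t^\ast M_t^\ast$ to \eqref{utile} gives $\|\tilde w(t) - \varphi^\sharp\| = \|e^{-it\HH}\varphi - M_tD_t\FF\varphi\| \to 0$.

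For $q>2$ I would use the spectral representation $e^{-it\HH}\varphi = \FF^\ast(e^{-it|\cdot|^2}\varphi^\sharp)$, that is,
\[
e^{-it\HH}\varphi(\x) = \int_{\RE^n} e^{-it|\k|^2}\,\Phi(\k,\x)\,\varphi^\sharp(\k)\,\d\k ,
\]
and split $\Phi(\k,\x) = (2\pi)^{-n/2}e^{i\k\cdot\x} + \Rb(\k,\x)$ according to \eqref{geneigen}. The plane-wave part reconstructs $e^{it\Delta}h$ with $h := \F^\ast\varphi^\sharp$, so by the Dollard factorization \eqref{decomp} one has $D_t^\ast M_t^\ast e^{it\Delta}h = \F M_t h$. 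Since $h \in L^{q'}$ (again by Hausdorff--Young, as $\|\F^\ast\varphi^\sharp\|_{q'}\le c\|\varphi^\sharp\|_q$), and $|M_tf - f| = |e^{i|\x|^2/4t}-1|\,|f|\to 0$ pointwise while being dominated by $2|f|$, dominated convergence gives $\|(M_t - I)h\|_{q'} \to 0$. The classical Hausdorff--Young inequality then yields $\|\F M_t h - \varphi^\sharp\|_q = \|\F(M_t - I)h\|_q \leq c\|(M_t-I)h\|_{q'} \to 0$, using $\F h = \varphi^\sharp$. Thus the free part already produces the claimed limit $\varphi^\sharp$ in $L^q$.

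It remains to show that the contribution of $\Rb$,
\[
R(t,\x) := D_t^\ast M_t^\ast \int_{\RE^n} e^{-it|\k|^2}\,\Rb(\k,\cdot)\,\varphi^\sharp(\k)\,\d\k \,(\x),
\]
tends to $0$ in $L^q$. Applying $D_t^\ast M_t^\ast$ amounts to the rescaling $\x\mapsto 2t\x$ inside $\Rb$, a factor of magnitude $(2t)^{n/2}$, and the phase $e^{-it|\x|^2}$; passing to polar coordinates $\k=\rho\ome$ this turns $R$ into an oscillatory integral in $\rho=|\k|$ with phase proportional to $t(\rho^2 + 2\rho|\x|)$. Crucially, because $\Rb$ carries the factor $e^{-i|\k||\x|}$, the phase derivative $2(\rho + |\x|)$ never vanishes for $\rho>0$, so the integral is \emph{non-stationary}. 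Repeated integration by parts in $\rho$ then produces gains $\big(t(\rho+|\x|)\big)^{-k}$ that defeat the $(2t)^{n/2}$ growth coming from $D_t^\ast$, using the explicit form of $\Rb$ (the Hankel/Macdonald prefactor for $n=2$, the factor $(4\pi\alpha + i|\k|)^{-1}$ for $n=3$) and the smoothness and decay of $\varphi^\sharp$. The resulting pointwise bound decays in $t$ and inherits from $\Rb$ a singularity at $\x=\0$ — a logarithm for $n=2$, which lies in $L^q_{\mathrm{loc}}(\RE^2)$ for every $q<\infty$, and a $|\x|^{-1}$ singularity for $n=3$, which lies in $L^q_{\mathrm{loc}}(\RE^3)$ exactly for $q<3$ — together with enough decay at infinity (supplied by the extra factors $(\rho+|\x|)^{-k}$) to be $L^q$-integrable. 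This is precisely where the ranges in $i)$ and $ii)$ originate, consistently with the mapping properties of the wave operators recorded in the proof of \eqref{genhy}.

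The main obstacle is this last step: making the non-stationary phase estimate for $R(t,\x)$ uniform enough in $\x$, in particular handling the small-$\rho$ region (where $\phi'\to 2|\x|$, controlled by the vanishing of the radial measure $\rho^{n-1}\,d\rho$ at the origin) and the large-$|\x|$ behaviour simultaneously, while tracking the $\x=\0$ singularity that pins down the admissible exponents $q$. Everything else is a combination of the unitarity of $M_tD_t$, the Dollard decomposition, dominated convergence, and the Hausdorff--Young inequalities already at our disposal.
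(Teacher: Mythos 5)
Your $q=2$ step and your treatment of the plane-wave part are essentially sound, and in fact your free part coincides with the term $\F M_t W^\ast\varphi$ in the paper's own decomposition (note $h=\F^\ast\varphi^\sharp=W^\ast\varphi$). One slip there: the claim $\|\F^\ast\varphi^\sharp\|_{q'}\le c\|\varphi^\sharp\|_q$ ``by Hausdorff--Young'' is backwards — for $q>2$ the Fourier transform does \emph{not} map $L^q$ into $L^{q'}$. The legitimate route, which is the one the paper takes, is $h=W^\ast\varphi$ together with $\varphi\in L^{q'}(\RE^n)$ and the $L^{q'}$-boundedness of $W^\ast$ (valid for $q'\in(1,\infty)$ if $n=2$, $q'\in(3/2,\infty)$ if $n=3$); this is fixable and does not affect the architecture of your argument.

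The genuine gap is in the remainder $R(t,\x)$, and specifically in the case $n=2$. Your non-stationary-phase argument rests on the fact that, under the dilation $\x\mapsto 2t\x$, the scattered wave $\Rb(\k,\x)$ factorizes as a $\k$-dependent coefficient times $e^{-i|\k||\x|}/|\x|$, so that the $t$-dependence exits as a power of $t$ and a phase $e^{-it(|\k|+|\x|)^2}$ against which one can integrate by parts. That multiplicative structure exists \emph{only} for $n=3$. For $n=2$, $\Rb(\k,\x)\propto\bigl(2\pi\alpha+\gamma+\ln(-|\k|/2i)\bigr)^{-1}H_0^{(1)}(-|\k||\x|)$, and the Hankel function is not a ``prefactor'': it is asymptotically a damped exponential only where $t|\k||\x|\gg1$, it has a logarithmic singularity in the complementary region (which never empties, since for every fixed $\x$ and every $t$ it contains a neighbourhood of $\k=\0$), and the logarithmic coefficient rescales into $\bigl(2\pi\alpha+\gamma+\ln(|\k|/2)-\ln t+i\pi/2\bigr)^{-1}$, which tends to zero pointwise but not uniformly. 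So repeated integration by parts in $\rho=|\k|$ does not go through in dimension two; this is exactly why the paper avoids any pointwise/oscillatory analysis and instead writes $W=I+\Omega$, commutes $\Omega$ through the dilation ($D_t^\ast\Omega=Km_t(|D|)D_t^\ast$), obtains bounds \emph{uniform} in $t$ (not decay) from Mikhlin's multiplier theorem and the $L^r$-boundedness of $K$ ($r\in(1,\infty)$ for $n=2$, $r\in(1,3)$ for $n=3$ — which is where the restriction $q<3$ comes from), and then upgrades the $o(L^2)$ information of \eqref{utile} to $o(L^q)$ by interpolation. Even for $n=3$ your sketch glosses over real difficulties: $\varphi^\sharp$ is neither smooth at $\k=\0$ nor rapidly decreasing (it carries the radial contributions of $\Rb$ and of $\Phi_\alpha$, and an extra $1/|\k|$ in the resonant case $\alpha=0$), the integrand vanishes only to finite order at $\rho=0$, so only finitely many integrations by parts are free of boundary terms, and the small-$|\x|$/large-$|\x|$ bounds must be matched to land in $L^q$, $q<3$. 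It is telling that the paper's source contains a discarded draft of precisely this eigenfunction-expansion strategy, carried out for $n=3$ by reduction to the one-dimensional free propagator acting on a half-line extension of limited Sobolev regularity (rather than by stationary phase), with the explicit remark that it does not extend to $n=2$ because the generalized eigenfunctions lack any multiplicative property.
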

 \begin{proof}
{We start by noticing that $\varphi = P_{ac}(\HH)\varphi $ by definition.} Using the intertwining property of the wave operator{, i.e. $ e^{-i\HH t} P_{ac}(\HH) =W  e^{i\Delta t} W^\ast$,} and \eqref{decomp}, we have
\begin{align}
\tilde w (t) & =D_t^* M_t^* e^{-i\HH t} P_{ac}(\HH)\varphi      \nonumber \\
& =D_t^* M_t^*W  e^{i\Delta t} W^\ast \varphi      \nonumber \\
& =D_t^* M_t^* W  M_t D_t \F M_t W^\ast \varphi      \nonumber \\
& =D_t^* M_t^* (I+\Omega)  M_t D_t \F M_t W^\ast \varphi      \nonumber \\
& = \F M_t W^\ast \varphi  + D_t^* M_t^* \Omega  M_t D_t \F M_t W^\ast \varphi.      \label{albero} 
\end{align} 
Now let us first consider the two dimensional case.\\ 
Notice that for $ 2\leq q <\infty$ we have
\[
\lim_{t\to +\infty} \| \F (M_t-1) W^\ast \varphi \|_{q}= \lim_{t\to \infty} \|  (M_t-1) W^\ast \varphi \|_{q'}=0
\]
thanks to $ \varphi \in L^{q'}(\RE^2)$, $ W^\ast \in \B(L^r(\RE^2), L^r(\RE^2) )$ for $r\in( 1,\infty)$ (see \cite{Y2}) and dominated convergence theorem.
Therefore, keeping into account that $\F W^\ast= \F^\sharp$, we can write:
\begin{align}
\F M_t W^\ast \varphi  & = \F W^\ast \varphi + \F (M_t-1) W^\ast \varphi \nonumber \\
&= \varphi^\sharp  +o(L^q). \label{decorazioni}
\end{align}
For $q=2$, from \eqref{albero} and \eqref{decorazioni} and taking into account that from \eqref{utile} it follows $\tilde w(t) = \varphi^\sharp + o(L^2)\ ,$ we conclude that
\be \label{stella}
D_t^* M_t^* \Omega  M_t D_t \F M_t W^\ast \varphi  = o(L^2).
\ee
{Next we prove that $\|D_t^* M_t^* \Omega  M_t D_t \F M_t W^\ast \varphi \|_r \leq c \|\varphi\|_{r'}$, for any $r$ such that $q<r<\infty$, where $c$ does not depend on $t$. To this aim} we need some detailed properties of the operators $\Omega$.
In facts, {see \cite{Y2},} $\Omega$ can be written as a composition of two operators: $\Omega f = K m (|D|) f$ where 
\[
K f(|\x|) = \frac{1}{i\pi} \int_0^\infty \GG_{-\la}(\x) \la
\lf( \int_{\mathbb S^1} \F f(\la \ome ) d\ome \ri) d\la
\]
and
\[
 m (|D|) f (\x)  = \frac{1}{2\pi } \int_{\RE^2} e^{i \x\cdot \k} m (|\k|) \F f(\k)\, \d\k\ ,
\qquad 
m (|\k|)=
\frac{2\pi }{2 \pi\al + \ln\lf( \frac{|\k|}{2} \ri) {+}\frac{\pi i}{2} + \ga }.
\]
Both the operators belongs to $\B(L^r(\RE^2), L^r(\RE^2) )$ for $r\in( 1,\infty)$. We derive now a commutation property of $\Omega $.
Using the explicit form of generalized eigenfunctions \eqref{zoccola}, we have
\begin{align}
D^\ast_t \Omega f(|\x|) 
& = \frac{t}{i\pi} \int_0^\infty \GG_{-\la}(t\x) \frac{2\pi \la}{2 \pi\al + \ln\lf( \frac{\la}{2} \ri) {+}\frac{\pi i}{2} + \ga }
\lf( \int_{\mathbb S^1} \F f(\la \ome ) d\ome \ri) d\la  \nonumber \\
& = \frac{1}{ it\pi} \int_0^\infty \GG_{-\la}( \x) \frac{2\pi \la}{2 \pi\al + \ln\lf( \frac{\la}{2t} \ri) {+}\frac{\pi i}{2} + \ga }
\lf( \int_{\mathbb S^1} \F f \lf(\frac{\la \ome}{t} \ri) d\ome \ri) d\la  \nonumber \\
& =  \frac{1}{ i\pi} \int_0^\infty \GG_{-\la}( \x) \frac{2 \pi \la}{2 \pi\al + \ln\lf( \frac{\la}{2t} \ri) {+}\frac{\pi i}{2} + \ga }
\lf( \int_{\mathbb S^1} \F D^\ast_t f(\la \ome ) d\ome \ri) d\la  \nonumber \\
&= K  m_t (|D|) D^\ast_t f (|\x|)
\end{align}
where we introduced the operator 
\[
 m_t (|D|) f (\x)  = \frac{1}{2\pi } \int_{\RE^2} e^{i \x\cdot \k} m_t (|\k|) \F f(\k)\,\d\k\ ,
\qquad 
m_t (|\k|)=
\frac{2\pi }{2 \pi\al + \ln\lf( \frac{|\k|}{2} \ri)-\ln(t) {+}\frac{\pi i}{2} + \ga}.
\]
Notice that $m_t\to 0$ pointwise but not uniformly, so
$
\lim_{t\to \infty}  \| m_t (|D|)  \|_{\B(L^q (\RE^2), L^q (\RE^2) )} =0
$
fails. However, exploiting Mikhlin's multiplier theorem, the operator $ m_t (|D|) $ is bounded in $L^q(\RE^2)  \ \forall q\in (1,\infty)$:
\be \label{pappa}
 \| m_t (|D|)  \|_{\B(L^q (\RE^2), L^q (\RE^2) )} \leq c\ ,
\ee where the constant $c$ is uniform in $t$ (see \cite{Y2} for a similar analysis and more details, in particular Lemma 4.3 and Section 4.4). Fix now $q<r<\infty$. One has
\begin{align}
\|  D_t^* M_t^* \Omega  M_t D_t \F M_t W^\ast \varphi  \|_r
& = \|  K m_t(|D|)  D_t^*  M_t D_t \F M_t W^\ast \varphi  \|_r \nonumber \\
& = \|  K m_t(|D|)    M_{1/t}  \F M_t W^\ast \varphi  \|_r \nonumber \\
& \leq c \|    \F M_t W^\ast \varphi  \|_r \nonumber \\
& \leq c \|    W^\ast \varphi  \|_{r'} \nonumber \\
& \leq c \|    \varphi  \|_{r'} \label{palle}.
\end{align}
We  conclude that $ D_t^* M_t^* \Omega  M_t D_t \F M_t W^\ast \varphi  =o(L^q)$ by \eqref{stella}, \eqref{palle} and interpolation.  Giving this and taking into account \eqref{albero} and \eqref{decorazioni} the thesis
 \eqref{stima-lineare} finally follows for the two dimensional case.

\noindent
The argument can be repeated in the three dimensional case with minor modifications. We claim that 
$\Omega f = K m (|D|) f$ where 
\[
K f(|\x|) = \frac{i}{(2\pi)^{2} } \int_0^\infty \GG_{-\la}(\x) \int_\RE e^{-i\la r} r Af(r) dr\, d\la\ ,
\qquad
Af(r)=  \int_{\mathbb S^2}  f(r {\ome} ) d{\ome} 
\]
and
\[
 m (|D|) f (\x)  = \frac{1}{(2\pi)^{3/2} } \int_{\RE^3} e^{i \x\cdot \k} m (|\k|) \F f(\k)\, \d\k\ ,
\qquad 
m (|\k|)=
\frac{|\k|}{\al + \frac{i |\k|}{4\pi}  }.
\]
To check that this is indeed the case we start by pointing out  the following  identities which  can be proved by a straightforward calculation:
\[
A\F f(r) = - \frac{1}{i r (2\pi)^{1/2}} \int_\RE  e^{-irs} sAf(s) ds\;;\qquad 
A\F^{-1} f(r) = \frac{1}{i r(2\pi)^{1/2}} \int_\RE  e^{irs} sAf(s) ds.
\]
 We also point out the trivial identity  $A\F f(r) = A\F^{-1} f(r)$, even though we will not use it. Next we compute 
\[
rAm (|D|) f (r) = 
 \frac{1}{i(2\pi)^{1/2}} \int_\RE  e^{irs} s m(s) A \F f(s) ds = 
\frac{1}{2\pi} \int_\RE  e^{irs}  m(s) 
  \int_\RE  e^{-iss'} s'Af(s') ds' ds,
\]
where in the first identity we used the fact that the function  $m$ is spherically symmetric, hence   $Am(\cdot) \F f(s) = m(s) A \F f(s)$. The decomposition $\Omega f = K m (|D|) f$ follows from the inversion formula for the one dimensional Fourier transform. 

The operator $m(|D|) $ is bounded in $L^r(\RE^3)$ for $r\in(1,\infty)$ again by Mikhlin's multiplier theorem.  Following and adapting \cite{Y1}, the operator $K$ belongs to $\B(L^r(\RE^3), L^r(\RE^3) )$ for $r\in( 1,3)$ and consequently $\Omega$ belongs to $\B(L^r(\RE^3), L^r(\RE^3) )$ for $r\in( 1,3)$.  Here we give a brief proof of $K\in B(L^r(\RE^3), L^r(\RE^3) )$ when $r\in( 1,3)$, for the reader's sake and because of slight differences from the treatment in \cite{Y1}.
By the change of variable $r \to -r$, the operator $K$ can be written as 
\[
K f(|\x|) =- \frac{i}{(2\pi)^{2} } \frac{1}{|\x|} \int_0^\infty e^{-i\lambda |\x|} \int_\RE e^{i\la r} r Af(r) dr\, d\la.
\]
Hence, by the properties of the (one dimensional) Fourier transform and after straightforward computations, it can be rearranged as 
\[
K f(|\x|) = -\frac{1}{(2\pi)^{2} |\x| } \lf( L\ast r Af \ri) (|\x|)\ , \qquad L(x) = \frac{1}{x-i0},
\]
where we introduced the distributional Calderon-Zygmund kernel $L$ (see \cite{SS11}, Section 3.1).
It is well known that
\[
\int_\RE | (L\ast g)(x)|^p \, \rho(x) dx \leq \int_\RE |  g(x)|^p \, \rho(x) dx 
\]
when $\rho$ belongs to the Muckenhoupt   class of $A_p$-weights (see \cite{G1}, Theorem 7.4.6). In our case $r^a \in A_p$ for $-1<a<p-1$ (see \cite{G1}, Example 7.1.7).
For $p \in (3/2,3)$ we have
\[
\| Kf\|_p^p = c \int_0^\infty r^{2-p} | L\ast r Af  (r) |^p dr \leq  c \int_0^\infty r^{2-p} |  r Af  (r) |^p dr = c \int_0^\infty r^{2} |   Af  (r) |^p dr =c \| f\|_p^p.
\]
Integrating by parts, we also have
\[
K f(|\x|) =- \frac{1}{(2\pi)^{2\\
	\\
} |\x|^2 } \lf( L\ast r^2 Af \ri) (|\x|), 
\]
and repeating the above argument it follows that for $p\in (1,3/2)$ we have
\[
\| Kf\|_p^p = c \int_0^\infty r^{2-2p} | L\ast r^2 Af  (r) |^p dr \leq  c \int_0^\infty r^{2-2p} |  r^2 Af  (r) |^p dr = c \int_0^\infty r^{2} |   Af  (r) |^p dr =c \| f\|_p^p.
\]
The case $p=3/2$ follows by interpolation. When commuting with dilations, we obtain
\[
D^\ast_t K m(|D|) = K m_t(|D|) D^\ast_t\ , \]
where
\[
 m_t (|D|) f (\x)  = \frac{1}{(2\pi)^{3/2} } \int_{\RE^3} e^{i \x\cdot \k} m_t (|\k|) \F f(\k)\, \d\k\ ,
\qquad 
m_t (|\k|)=
\frac{|\k|}{\al t + \frac{i |\k|}{4\pi}  }.
\]

Since $m_t (|D|)$ is bounded uniformly in $t$, we can repeat the previous argument choosing $2<q<r<3$. The proof of the Lemma is complete.
\end{proof}
\begin{remark} As it is well known, the point interaction in dimension three admits a zero energy resonance for $\al=0$. We notice that this fact does not affect the previous proof of Lemma \ref{l:stima-lineare}.
\end{remark}
\begin{proof}[{\bf Proof of Theorem \ref{t:main}}]
First notice that {(see, e.g., \cite[App. E]{Albeverio})} by the linear scattering theory for $\HH$, for any $\psi_+ \in L^2 (\RE^n)$ there exists $v_+ \in  P_{ac}(\HH) L^2(\RE^n)$ such that 
\[
\lim_{t\to +\infty} \| e^{i t \Delta} \psi_+ - e^{-i t \HH} v_+ \|= 0 .
\]
Therefore it is sufficient to prove that  $\nexists v_+ \in P_{ac}(\HH) L^2(\RE^n)$, $v_+\neq 0$ such that
\be \label{asdecomp}
\psi (t) = e^{-it \HH} v_+ + l(t) + o(L^2) 
\ee
as $t\to +\infty$ with $l\in L^\infty(\RE^+;  L^2 (\RE^n) \cap  L^{2-} (\RE^n))$.
We follow a variant of the classical argument of Glassey \cite{Gla}, due to \cite{MN}; see also \cite{Ba, Stra}.
We proceed by {contradiction}  and assume that  $\exists v_+  \in P_{ac}(\HH) L^2(\RE^n)$ such that \eqref{asdecomp} holds. Let us consider
\be
B(t) = \Im \lan \psi(t) , w(t) \ran
\ee
with  $w(t) = e^{-i t \HH } \varphi$, where  $\varphi $ is a function that satisfies the assumptions of Lemma \ref{l:stima-lineare} and that will be fixed later on. Using \eqref{cauchy} we have
\begin{align}
B(t) & = B(1) + \int_1^t \frac{d}{ds} \Im \lan \psi(s) , w(s) \ran \, ds \nonumber \\
& = B(1) + \int_1^t \Re \lan F(\psi(s)) , w(s) \ran \, ds.
\end{align}
Let us define $\tilde \psi(s ) = D_s^\ast M_s^\ast \psi(s)$ and  $\tilde w(s ) = D_s^\ast M_s^\ast w(s)$. Since $F$ is a power-type nonlinearity, we immediately obtain 
\be
 \lan F(\psi(s)) , w(s) \ran = \frac{1}{s^{\frac{n}{2}(p-1)  } }  \lan F(\tilde \psi(s)) , \tilde w(s) \ran.
\ee
Notice that $B(t)$ is a bounded function since  we 
have $| B(t)  |\leq \| \psi(t) \| \|w(t)\| = \| \psi_0 \| \| \varphi \|$ by Cauchy-Schwartz inequality and conservation of mass.  
Since $1/ s^{\frac{n}{2}(p-1)  }$  is not integrable
at infinity for $n=2,\ p\in (1, 2]$ and for $n=3,\ p\in (1, 5/3]$, if we prove that  $\lim_{s\to \infty}\Re \lan F(\tilde \psi(s)) , \tilde w(s) \ran >0$ we obtain a contradiction and
the proof is complete. However, as we will see, a further condition will be needed for $n=3$.
Indeed, we shall prove that 
\be \label{limit}
\lim_{s\to +\infty} \lan F(\tilde \psi(s)) , \tilde w(s) \ran =  \lan F( v_+^\sharp) , \varphi^\sharp \ran,
\ee 
and then we will show that one can choose $\varphi$ such that  $\Re \lan F( v_+^\sharp) , \varphi^\sharp \ran>0$. In order to prove \eqref{limit} it is sufficient to prove that
\begin{align}
F(\tilde \psi(s)) & = F( v_+^\sharp) + o(L^{\frac{2}{p} } + L^{\frac{2}{p}- }  ) \label{stimauno} \\
\tilde w(s) &=  \varphi^\sharp + o(L^{\frac{2}{2-p}} \cap L^{\frac{2}{2-p}+} ). \label{stimadue}
\end{align}
The asymptotics \eqref{stimadue} follows from Lemma \ref{l:stima-lineare}{, since the constraint  $\frac{2}{2-p} <3$, needed for $n=3$, is satisfied  for $p\in(1,4/3)$}. Next we prove \eqref{stimauno}. Define  $\tilde l(s ) := D_s^\ast M_s^\ast l(s)$. Since we assumed that $ \psi (t)  - e^{-it \HH} v_+ - l(t) = o(L^2)$, 
then $ \tilde \psi (t) - D_t^\ast M_t^\ast e^{-it \HH} v_+ - \tilde l(t) = o(L^2)$. Taking into account \eqref{utile}, we obtain
$ \tilde \psi (t) - v_+^\sharp - \tilde l(t) = o(L^2)$. Moreover, one has
\[
\| \tilde l(s)\|_q^q =
\| D_s^\ast M_s^\ast l(s) \|_q^q = \frac{1}{s^{ \frac{n}{2}(2-q)} } \| l(s)\|_q^q,
\]
and therefore $\tilde l(s) = o(L^{2-} )$. Using H\"older estimate, the conservation of mass and the properties of $\FF$, we obtain 
\begin{align}
\|  F(\tilde \psi(s)) -  F( v_+^\sharp) \|_{L^{2/p} + L^{2/p-} } & \leq c 
\| (|\tilde \psi (s) |^{p-1} + | v_+^\sharp|^{p-1} )( |\tilde \psi (s) - v_+^\sharp - \tilde l(s)| + |   \tilde l(s)|  )  \|_{L^{2/p} + L^{2/p-} } \nonumber \\
& \leq c  ( \|\tilde \psi (s) \|^{p-1} + \| v_+^\sharp\|^{p-1} )( \|\tilde \psi (s) - v_+^\sharp - \tilde l(s)\| + \|   \tilde l(s)\|_{2-} )   \nonumber \\
&\leq c( \|\tilde \psi (s) - v_+^\sharp - \tilde l(s)\| + \|   \tilde l(s)\|_{2-} )  \label{quasiuno},
\end{align}
from which follows the asymptotics \eqref{stimauno}.

The last point is to choose $\varphi$ such that  $\Re \lan F( v_+^\sharp) , \varphi^\sharp \ran >0$.  Since $v_+^\sharp\neq 0 $, then $F(v_+^\sharp) \neq 0$ and one can take  $f\in {C^\infty_0 }(\RE^n\setminus\{0\})$ such that
 $\Re \lan F( v_+^\sharp) , f \ran >0$. Set $g := \FF^\ast f$, by construction $g \in P_{ac}(\HH)$ and   $\langle\Phi_\alpha ,g\rangle =0$ (whenever $\HH$ has non empty point spectrum).  Choose $\varphi_{0} \in C^\infty_0 (\RE^n\setminus \{ \0\} )$ such that 
$\| \varphi_{0} - g \|_{\frac2p}<\ve $ and set $\varphi:= \varphi_{0} - \Phi_\alpha \langle \Phi_\alpha,\varphi_{0}\rangle$, so that $\varphi$ satisfies the assumptions of Lemma \ref{l:stima-lineare}. By noticing that 
\[
| \langle \Phi_\alpha,\varphi_{0}\rangle|  =  |\langle \Phi_\alpha,(\varphi_{0}-g)\rangle| \leq \|\Phi_\alpha\|_{\frac{2}{2-p}}\| \varphi_{0} - g \|_{\frac2p} < \ve \|\Phi_\alpha\|_{\frac{2}{2-p}},
\]
one infers
\[
\| \varphi - g \|_{\frac2p} \leq \| \varphi_{0} - g \|_{\frac2p} + | \langle \Phi_\alpha,\varphi_{0}\rangle|  \|\Phi_\alpha\|_{\frac{2}{p}} \leq \ve (1+ \|\Phi_\alpha\|_{\frac{2}{2-p}}\|\Phi_\alpha\|_{\frac{2}{p}}). 
\]
Then $\| \varphi^\sharp - f \|_{ \frac{2}{2-p} } \leq c \, \ve$ by \eqref{genhy}  and 
$\Re \lan F( v_+^\sharp) , \varphi^\sharp \ran >0$ if $\ve $ is small enough since $ F( v_+^\sharp) \in L^{\frac2p}(\RE^n)$. 
\end{proof}

{\begin{remark}\label{remark5}
Notice that the condition $\frac{2}{2-p}\in [2,3)$ needed to apply \eqref{genhy} forces $p\in [1,4/3)$ for $n=3$, while no further condition beside $p\in(1,2)$ is needed for $n=2$.   Indeed,  for $n=3$ the boundedness of the wave operators, used crucially in the proofs of Eqs. \eqref{genhy} and \eqref{stima-lineare}, is known to fail for $q\geq 3$ (see \cite{Y1}). For a comparison, the simple estimate $(18)$ in \cite{MN} is not available for the full range of exponents; this is due to the fact that the Hausdorff-Young inequality must be replaced by \eqref{genhy}. We remark also that, due to the singularity of $G^\lambda$, elements of $D(\HH)$ in general do not belong to  $L^q(\RE^3)$ for $q\geq 3$. Since the unitary group $e^{-it\HH}$ leaves invariant $D(\HH)$, $\tilde w(t) \notin L^q(\RE^3)$ for $q\geq 3$ and  \eqref{stima-lineare} cannot hold true in this case.  
\end{remark} 
}

\section*{Acknowledgments.}
\noindent The authors acknowledge the support of the Gruppo Nazionale di Fisica Matematica (GNFM-INdAM). D. Noja acknowledges for funding the EC grant IPaDEGAN (MSCA-RISE-778010).

\end{document}